\documentclass[pdflatex,sn-nature]{sn-jnl}
\usepackage{graphicx}
\usepackage{amsmath,amssymb,amsfonts}
\usepackage{amsthm}
\usepackage{booktabs}
\theoremstyle{thmstyleone}
\newtheorem{theorem}{Theorem}
\newtheorem{proposition}[theorem]{Proposition}
\theoremstyle{thmstyletwo}
\newtheorem{remark}[theorem]{Remark}
\newcommand{\Sv}{\,\mathrm{Sv}}
\newcommand{\FovS}{F_{\mathrm{ovS}}}
\newcommand{\FH}{F_{H}}
\newcommand{\R}{\mathbb{R}}

\begin{document}

\title[Memory operator ensembles and AMOC criticality]{Memory operator
ensembles indicate proximity to criticality in simulated AMOC transitions}

\author*[1]{\fnm{Mauricio} \sur{Herrera-Mar\'in}}\email{mherrera@udd.cl}
\affil*[1]{\orgdiv{Faculty of Engineering}, \orgname{Universidad del
Desarrollo}, \orgaddress{\city{Santiago}, \country{Chile}}}

\abstract{Classical early-warning signals assume that recovery slows near a tipping
point, but in a well-studied CESM simulation of AMOC collapse, autocorrelation
and variance decrease instead. We develop a preregistered diagnostic that uses
only past data to estimate the slowest mode of a memory operator fitted to the
evolving overturning structure. The operator ensemble produces an exploratory
alarm 77 years before collapse, compared with 29 years for a physics-based
freshwater-transport indicator. A blind test on the recovery branch does not
validate the same timing rule, supporting interpretation of the diagnostic as
a proximity measure rather than a universal clock. In nine additional
simulations, it warns before all four finite-rate forced transitions but not
before three rapid noise-induced collapses available only as scalar AMOC
records. Matched ablations show that the signal depends on long-timescale
memory and vertical overturning structure rather than trend, forcing, or
multivariate information alone.}

\keywords{AMOC, early-warning signals, critical transitions, Mori--Zwanzig,
preregistration, CESM}

\maketitle

\section{Introduction}
\label{sec:intro}

The Atlantic Meridional Overturning Circulation is a major regulator of North
Atlantic climate, and a transition to a strongly weakened state would
reorganize ocean heat transport and regional climate. Yet no observed AMOC
transition exists against which an early-warning method can be objectively
tested. Paleoclimatic evidence together with hierarchy-spanning
modeling supports the existence of a weak or collapsed circulation state
\cite{stommel1961,rahmstorf1996,vanwesten2023}, although whether the real system
can reach a \emph{fully} collapsed state is itself contested: a recent 34-model
analysis of extreme-forcing experiments finds Southern Ocean wind-driven upwelling
sustaining a weakened but nonzero overturning in every case \cite{baker2025}. Whether an approach to an AMOC
transition could be detected in advance, and from what kind of data, has become
one of the sharpest questions in climate dynamics
\cite{lenton2011,boers2021,ditlevsen2023,benyami2024}. Any proposed detection
method faces an uncomfortable evidential situation: the real ocean provides no
ground-truthed transitions, observational records are short relative to the
relevant time scales, and uncertainties may preclude reliable extrapolation
\cite{benyami2024}. Simulated transitions with known outcomes are therefore
the only setting in which detection methods can be tested rather than merely
proposed, and the multi-millennial CESM quasi-equilibrium hosing experiment of
\cite{vanwesten2023,vanwesten2024}---a state-of-the-art global model with a
dated collapse and a dated recovery---provides one of the most stringent
available laboratories for testing early-warning methods.

Existing early-warning approaches fall into three strands. The first is the
\emph{statistical} strand built on critical slowing down: as a system
approaches a fold, its dominant restoring rate vanishes, and lag-one
autocorrelation and variance of a scalar observable are expected to rise
\cite{held2004,scheffer2009,dakos2012,kuehn2011}. These indicators are
attractive because they are model-free, and they underpin the prominent
observational analyses of \cite{boers2021} and \cite{ditlevsen2023}. Their
limitations are equally well documented: they presuppose that the scalar
observable behaves like a one-dimensional system near a normal form, they
compress all restoring processes into a single coefficient, and their
statistical power on realistic records is contested \cite{benyami2024}.
In coupled-model hosing experiments their detectability is mixed, depending
on the variable, location, model, and forcing protocol \cite{boulton2014}.
Decisively for the present setting, they \emph{fail} on the CESM collapse:
\cite{vanwesten2024} report that variance and autocorrelation of the AMOC
strength decrease toward the tipping point, a finding we replicate with a strictly one-sided pipeline below.

The second strand is \emph{physics-based}: indicators derived from the
salt-advection feedback, most prominently the freshwater transport by the
overturning at the Atlantic's southern boundary, $\FovS$, whose sign and
minimum diagnose the stability regime
\cite{rahmstorf1996,devries2005,vanwesten2024,vanwesten2025}. On the CESM collapse the $\FovS$ minimum precedes tipping by roughly a quarter
century and is a leading physics-based benchmark for this experiment. Its strengths and limits mirror the statistical
strand's in reverse: it encodes real mechanism, but it requires knowing the
mechanism, it is specific to one feedback, and---as we quantify below---its
real-time use entails a nontrivial false-turning-point cost that retrospective
analyses do not expose.

The third strand, from which this work descends, is the theory of
\emph{projected dynamics with memory}. The Mori--Zwanzig formalism shows
exactly how eliminating unresolved variables converts their influence on a
resolved observable into a memory term plus fluctuations
\cite{mori1965,zwanzig1973}, and data-driven memory closures have repeatedly
improved reduced models of climate and geophysical dynamics
\cite{kondrashov2015,falkena2019,lin2021}. Two further recent lines bear directly on what follows: data-driven approaches
now extend well beyond scalar statistics, including rare-event sampling of
internally generated AMOC transitions \cite{cini2024}; and it has become clear
that in high-dimensional systems not every observable carries critical slowing
down, tying detectability to the choice of resolved coordinates
\cite{lohmann2025,smolders2025}. What the memory strand has so far not
supplied is an early-warning \emph{diagnostic}: an object, computable from
past data alone, whose value has stability semantics for the underlying
transition.

This is the gap addressed here. For a resolved observable closed with a stable
finite-dimensional memory lift, local stability is an exact spectral property
of the lifted operator, and the balance that vanishes at a stationary
transition decomposes exactly into instantaneous drift plus integrated memory
feedback (Section~\ref{sec:theory}). The slow multiplier of a memory operator fitted using past data only is therefore a candidate indicator with three properties the
statistical strand lacks: it has an exact spectral meaning within the fitted lifted model (distance of a
spectrum to the unit circle, tested against true stability in controlled systems below), it separates delayed feedback from instantaneous
damping, and it does not presuppose scalar normal-form behavior of the
observable.

A method evaluated on a single simulated transition invites silent tuning and
selective reporting. We therefore adopt, to our knowledge for the first time in
this literature, a full preregistration discipline: alarm rules, thresholds,
calibration procedures, and decision gates frozen in writing before
evaluation; every change a numbered, dated amendment committed publicly before
the corresponding computation; verdicts reported exactly as produced,
including failures; and blind out-of-sample tests whose components were
frozen and publicly timestamped before any value of the target data was read
(Supplementary Note~2).

Here we ask whether the stability of a data-fitted memory operator can provide
early warning when scalar critical-slowing-down indicators fail. We combine an
exact spectral formulation with strictly one-sided estimation, preregistered
evaluation, and blind testing. Across the main CESM hysteresis experiment and
nine additional simulations, the results identify both a finite-rate regime in
which the indicator provides usable warning and failure modes associated with
rapid noise-induced transitions and limited observation.

\section{Results}
\label{sec:results}

\begin{figure}[t]
\centering
\includegraphics[width=.98\textwidth]{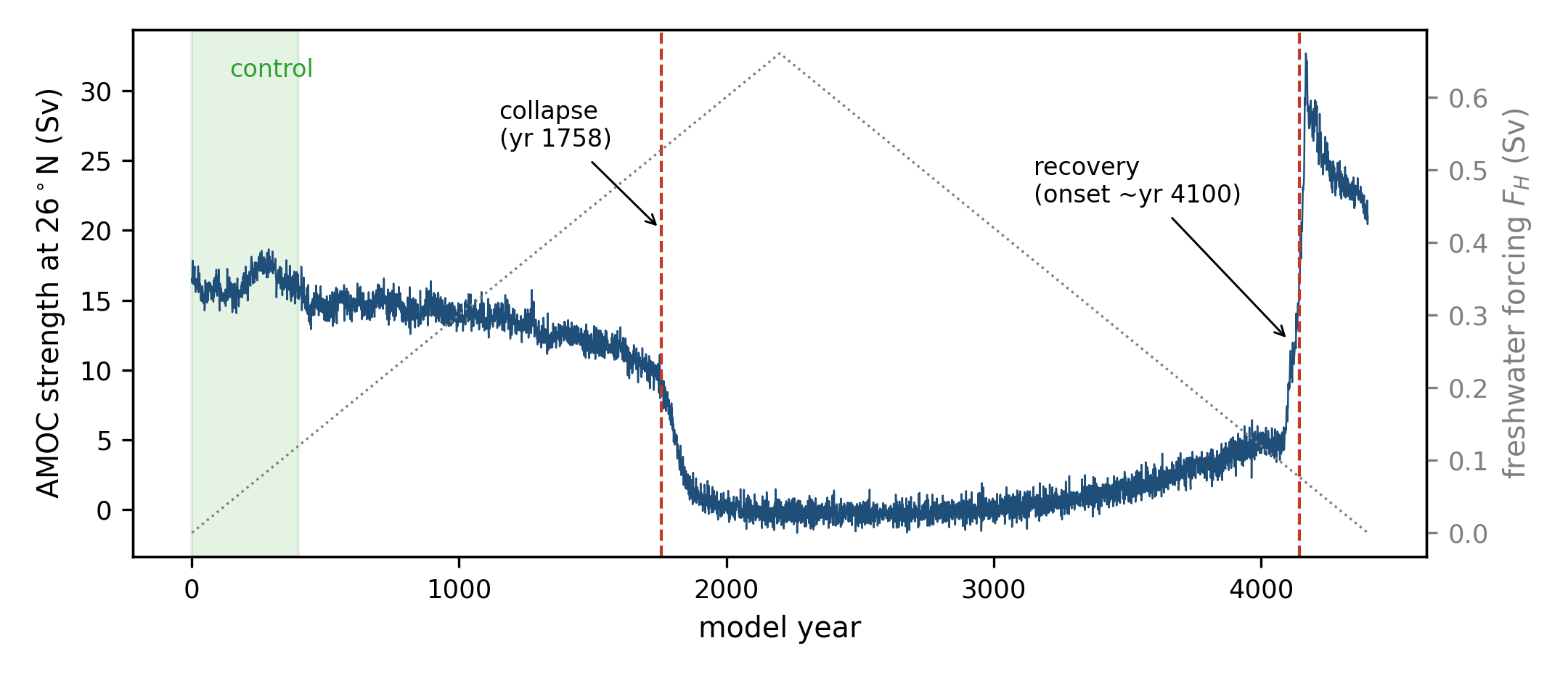}
\caption{The CESM quasi-equilibrium hysteresis experiment. AMOC strength at
$26^\circ$N (blue, left axis) and freshwater forcing $\FH$ (dotted gray, right
axis) over the full 4{,}400 model years. The control segment used for all
threshold calibration is shaded green; the collapse (year 1758) and the
recovery are indicated.}
\label{fig:overview}
\end{figure}

Figure~\ref{fig:overview} summarizes the 4{,}400-year CESM hysteresis
experiment underlying all tests below; data provenance, parity verification,
and processing are described in Methods.

\subsection{Exact stability criteria for projected dynamics with memory}
\label{sec:theory}

This subsection makes the paper self-contained; a companion manuscript
develops the continuous-time counterpart, sampling consistency, and
controlled verifications in thermohaline models \cite{herrera2026}.

Let $q_n\in\R^{d}$ be the resolved annual observable and let the unresolved
influence be represented by a linear memory lift $z_n\in\R^{h}$,
\begin{equation}
q_{n+1}=A\,q_n+B\,z_n+F u_n+b_0+\epsilon_{n+1},
\qquad
z_{n+1}=C\,q_n+D\,z_n,
\label{eq:lift}
\end{equation}
with $u_n$ a known exogenous forcing coordinate, $\epsilon$ residual noise,
and $D\in\R^{h\times h}$ Schur stable (spectral radius $<1$) by construction.
Eliminating $z$ gives the exact convolution
$q_{n+1}=A q_n+\sum_{k\ge1}K_k\,q_{n-k}+\ldots$ with matrix-valued lag kernel
\begin{equation}
K_k=B\,D^{k-1}C,\qquad k\ge1 .
\label{eq:kernel}
\end{equation}
Local stability of \eqref{eq:lift} is governed by the block matrix
$M=\bigl(\begin{smallmatrix}A&B\\C&D\end{smallmatrix}\bigr)$.

\begin{proposition}[Schur-resolvent factorization and unit-multiplier criterion;
standard identity, stated for self-containedness]
\label{thm:schur}
For every $\zeta$ with $\zeta I-D$ invertible,
\begin{equation}
\det(\zeta I-M)=\det(\zeta I-D)\,\det\Phi(\zeta),
\qquad
\Phi(\zeta)=\zeta I-A-B(\zeta I-D)^{-1}C .
\label{eq:factor}
\end{equation}
Define the resolvent moments
$\mathcal{N}_j=B(I-D)^{-(j+1)}C\in\R^{d\times d}$. If $1\notin\mathrm{spec}(D)$,
then $M$ has a unit multiplier if and only if $\,\det\bigl(I-A-\mathcal{N}_0\bigr)=0$.
Moreover, when $\rho(D)<1$,
$\mathcal{N}_0=\sum_{k\ge1}K_k$,
$\mathcal{N}_1=\sum_{k\ge1}kK_k$, and
$\mathcal{N}_2=\sum_{k\ge1}\tfrac{k(k+1)}{2}K_k$,
so $\mathcal{N}_0$ is the integrated memory feedback and $\mathcal{N}_1,
\mathcal{N}_2$ encode its duration and curvature.
\end{proposition}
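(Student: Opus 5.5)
The plan is to prove the three claims in order: the factorization \eqref{eq:factor}, then the unit-multiplier criterion as its specialization at $\zeta=1$, then the series identities for the resolvent moments.

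\textbf{Factorization.} Fix $\zeta$ with $\zeta I-D$ invertible and write
\[
\zeta I-M=\begin{pmatrix}\zeta I-A & -B\\ -C & \zeta I-D\end{pmatrix}.
\]
We take the Schur complement with respect to the lower-right block. Multiplying on the right by the unit lower-triangular block matrix
\[
\begin{pmatrix} I & 0\\ (\zeta I-D)^{-1}C & I\end{pmatrix},
\]
which has determinant one, gives
\[
\begin{pmatrix}\Phi(\zeta) & -B\\ 0 & \zeta I-D\end{pmatrix}.
\]
Indeed, the $(1,1)$ entry is $\zeta I-A-B(\zeta I-D)^{-1}C=\Phi(\zeta)$, and the $(2,1)$ entry is $-C+(\zeta I-D)(\zeta I-D)^{-1}C=0$. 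The determinant of a block upper-triangular matrix is the product of the determinants of its diagonal blocks. Hence $\det(\zeta I-M)=\det\Phi(\zeta)\det(\zeta I-D)$, which is \eqref{eq:factor}.

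\textbf{Unit-multiplier criterion.} If $1\notin\mathrm{spec}(D)$, then $\zeta=1$ is admissible in the factorization. Since $\det(I-D)\neq0$, we get $\det(I-M)=0$ if and only if $\det\Phi(1)=0$. Moreover $\Phi(1)=I-A-B(I-D)^{-1}C=I-A-\mathcal N_0$. So $M$ has the eigenvalue $1$ exactly when $\det(I-A-\mathcal N_0)=0$.

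\textbf{Moment identities.} Assume now $\rho(D)<1$. Then the Neumann series $(I-D)^{-1}=\sum_{m\ge0}D^m$ converges absolutely. Moreover, for each $j\ge0$,
\[
(I-D)^{-(j+1)}=\sum_{m\ge0}\binom{m+j}{j}D^m .
\]
This follows by differentiating the scalar series $(1-x)^{-1}=\sum_m x^m$ $j$ times, or by forming Cauchy products of the Neumann series. Absolute convergence holds because $\|D^m\|\le c\,r^m$ for some $r\in(\rho(D),1)$, and polynomial weights do not affect convergence of a geometric series.

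Sandwiching between $B$ and $C$ and reindexing with $k=m+1$ gives
\[
\mathcal N_j=\sum_{k\ge1}\binom{k-1+j}{j}\,B D^{k-1}C=\sum_{k\ge1}\binom{k-1+j}{j}K_k ,
\]
using \eqref{eq:kernel}. The binomial weights are $1$ for $j=0$, $k$ for $j=1$, and $k(k+1)/2$ for $j=2$, which yields the three stated identities.

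\textbf{Main obstacle.} The only delicate point is justifying the operator-valued binomial series $(I-D)^{-(j+1)}=\sum_m\binom{m+j}{j}D^m$. We plan to verify it by induction on $j$. The inductive step multiplies $\sum_m\binom{m+j}{j}D^m$ by $(I-D)^{-1}=\sum_m D^m$ and uses the hockey-stick identity $\sum_{i\le m}\binom{i+j}{j}=\binom{m+j+1}{j+1}$. Rearranging the Cauchy product is legitimate because both series converge absolutely.
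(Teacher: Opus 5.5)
Your proposal is correct and follows essentially the same route as the paper. You use the Schur determinant identity with respect to the invertible block $\zeta I-D$, evaluate at $\zeta=1$ to obtain $\Phi(1)=I-A-\mathcal{N}_0$, and sandwich the binomial Neumann series $(I-D)^{-(j+1)}=\sum_{m\ge0}\binom{m+j}{j}D^m$ between $B$ and $C$; beyond the paper, you only supply details it leaves implicit, namely the explicit block-triangularization and the hockey-stick induction with absolute convergence that justifies the matrix binomial series.
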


\begin{proof}
Equation \eqref{eq:factor} is the Schur determinant identity for the block
$\zeta I-M$ with invertible block $\zeta I-D$. At $\zeta=1$,
$\det(I-D)\neq0$, so the left side vanishes iff $\det\Phi(1)=0$, and
$\Phi(1)=I-A-\mathcal{N}_0$. The moment identities follow from the Neumann
series $(I-D)^{-(j+1)}=\sum_{k\ge0}\binom{k+j}{j}D^{k}$ applied to
\eqref{eq:kernel}.
\end{proof}

Proposition~\ref{thm:schur} is the conceptual core: for a projected observable,
the stationary stability boundary is an exact balance between instantaneous
drift ($A$) and integrated delayed feedback ($\mathcal{N}_0$). Scalar lag-one
autocorrelation conflates these two contributions; the operator diagnostic
separates them.

\subsubsection*{Slow multiplier and its moment reconstruction}

The near-unit multiplier admits an exact second-order reconstruction from the
first three moments $(\mathcal{N}_0,\mathcal{N}_1,\mathcal{N}_2)$ via a
Lyapunov--Schmidt reduction; the statement and proof are given in
Supplementary Note~1.

In this paper the \emph{primary} diagnostic is the exact eigenvalue of $M$
closest to $1$, computed directly from the fitted blocks; the reconstruction
serves as an internal consistency check, valid in its
asymptotic regime (pooled median absolute error $3.7\times10^{-3}$ in our
fits, degrading only where $\delta$ is not small, as expected).

\subsection{The forced collapse: comparative early warning}
\label{sec:upbranch}

\subsubsection{Classical indicators never alarm (preregistered)}

Causal lag-one autocorrelation never approaches its minimum admissible
threshold ($0.519$): its maximum over 1{,}357 evaluation years is $0.48$, and
it declines to negative values over the final pre-tipping decades
(Fig.~\ref{fig:upalarms}a). Variance exceeds its threshold only transiently
($0.555$ vs.\ $0.451$) and yields no valid sustained alarm
(Fig.~\ref{fig:upalarms}b). This replicates, with a strictly one-sided pipeline, the finding of
\cite{vanwesten2024}: critical slowing down of the scalar index is absent
here, so any successful indicator must exploit different structure.

\subsubsection{Causal $\FovS$: a 29-year lead costing 45 retractions (amended rule)}

The sequential turning-point rule issues its final unretracted alarm at model year
1729 (lead 29 years), bracketed by the retrospective references (raw minimum
1726, spline minimum 1732), at the cost, under our sequential implementation, of 45
retractions over the thirteen-century decline (Fig.~\ref{fig:upalarms}c). A real-time user applying this sequential implementation would have faced 45
retracted turning-point declarations before the final one; many are
provisional-minimum updates rather than classical false alarms, and it is
precisely this operational cost that retrospective evaluation does not reveal
and that, to our knowledge, had not been quantified.

\subsubsection{Operator point estimate: real episodes, no certified alarm
(preregistered/amended)}

The point-estimate multiplier (control threshold $0.99949$; zero sustained
control false alarms) exhibits genuine supercritical episodes: the operator
fitted at origin 1676 has exact multiplier $1.00399$ (a real eigenvalue beyond
the unit circle), and with 10-year refits two supercritical refits occur, at
1671 ($1.00516$) and 1731 ($1.00242$) (Fig.~\ref{fig:upalarms}d). Under the
frozen and amended rules no alarm is certified---the episodes are isolated at
refit resolution---and the preregistered primary gate is \emph{not evaluable},
since neither the operator nor the classical indicators produce a certified
alarm. We record this verdict unchanged. Its proximate cause is identifiable:
hyperparameters are selected by one-step forecast loss, to which the stability
diagnostic is far more sensitive than the loss itself, so the
selected-configuration trajectory flickers around threshold while the
underlying ensemble does not.

\begin{figure}[t]
\centering
\includegraphics[width=\textwidth]{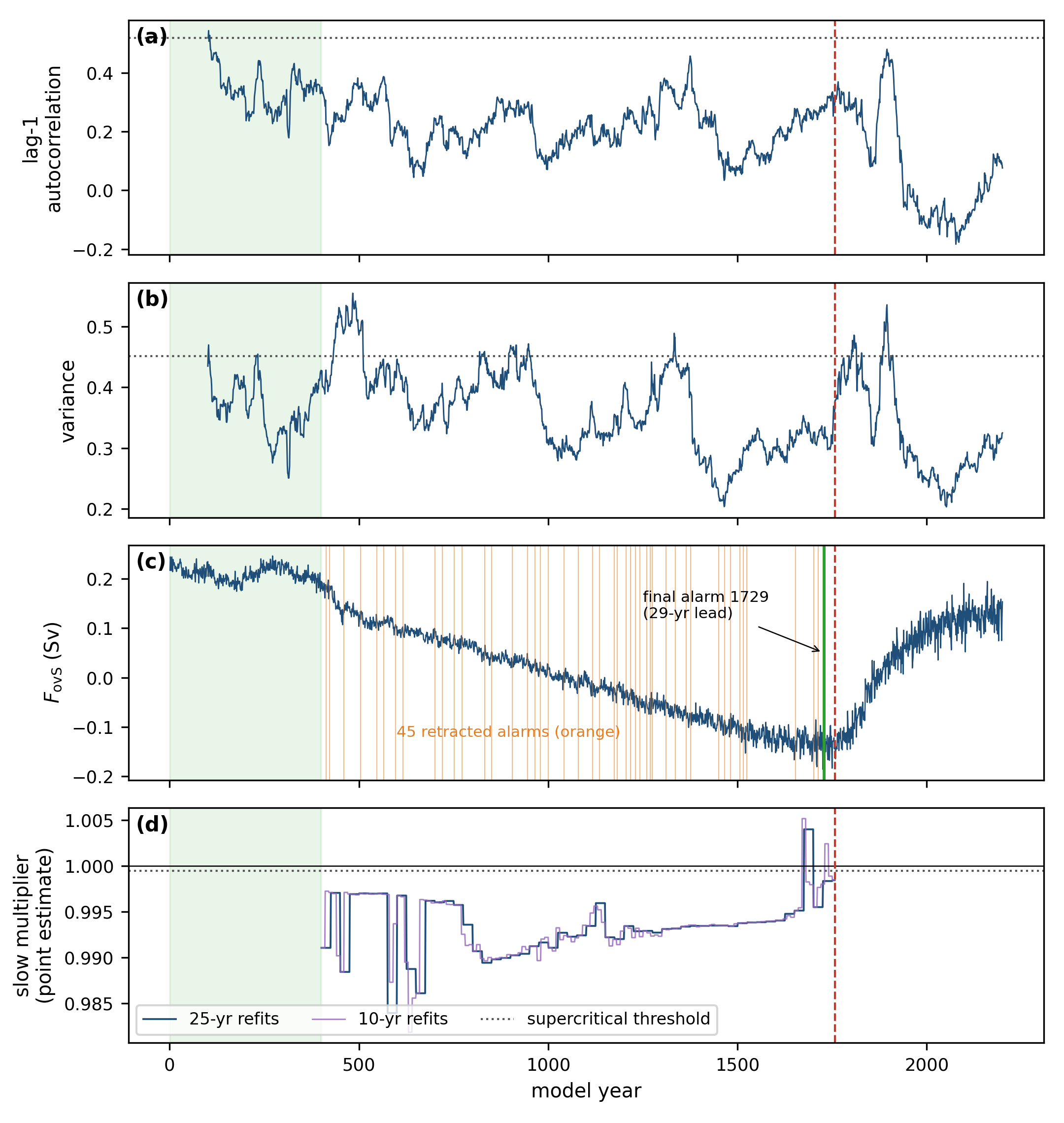}
\caption{Ascending branch, one-sided indicators against control-calibrated
thresholds (dotted); control shaded green, tipping year dashed red.
(a) lag-one autocorrelation; (b) variance; (c) $\FovS$ with the 45 retracted
turning points (orange) and the final alarm at 1729 (green); (d) operator
point-estimate multiplier for 25- and 10-year refits.}
\label{fig:upalarms}
\end{figure}

\subsubsection{The supercritical band is an ensemble property (exploratory)}

Evaluating all 24 configurations on fixed origins shows the pre-collapse
episode to be configuration-spanning: over 1650--1710 the supercritical
fraction is $0.42$--$0.75$, is $0.88$ at $\alpha=10^{-2}$ and $0.58$ at
$\alpha=10^{-1}$, with only the strongest ridge blind to it
(Fig.~\ref{fig:robustness}); on sparse control origins the same fixed-grid
fraction is $0$--$8\%$. The full ensemble fraction $\psi$ has a marked
temporal structure (Fig.~\ref{fig:psiup}): near zero for eight centuries
(801--1550), rising through 1551--1650, and sustained at mean $0.58$ through
the final century. The control is not silent---isolated decadal episodes of
internal variability (control years $\sim$230--290) reach instantaneous
$\psi$ up to $0.75$---so the \emph{level} of $\psi$ does not separate the
pre-collapse century from the strongest control excursions; \emph{duration}
does. The 50-year trailing (past-only) mean has control maximum $0.45$; the
evaluation series crosses the threshold $0.46$ at model year 1681 and never
returns below it before the collapse: an alarm with a 77-year lead and zero
sustained false alarms by construction. We label this alarm exploratory (the
duration statistic was identified after the frozen rules had been evaluated);
its blind validation follows.

\begin{figure}[t]
\centering
\includegraphics[width=.98\textwidth]{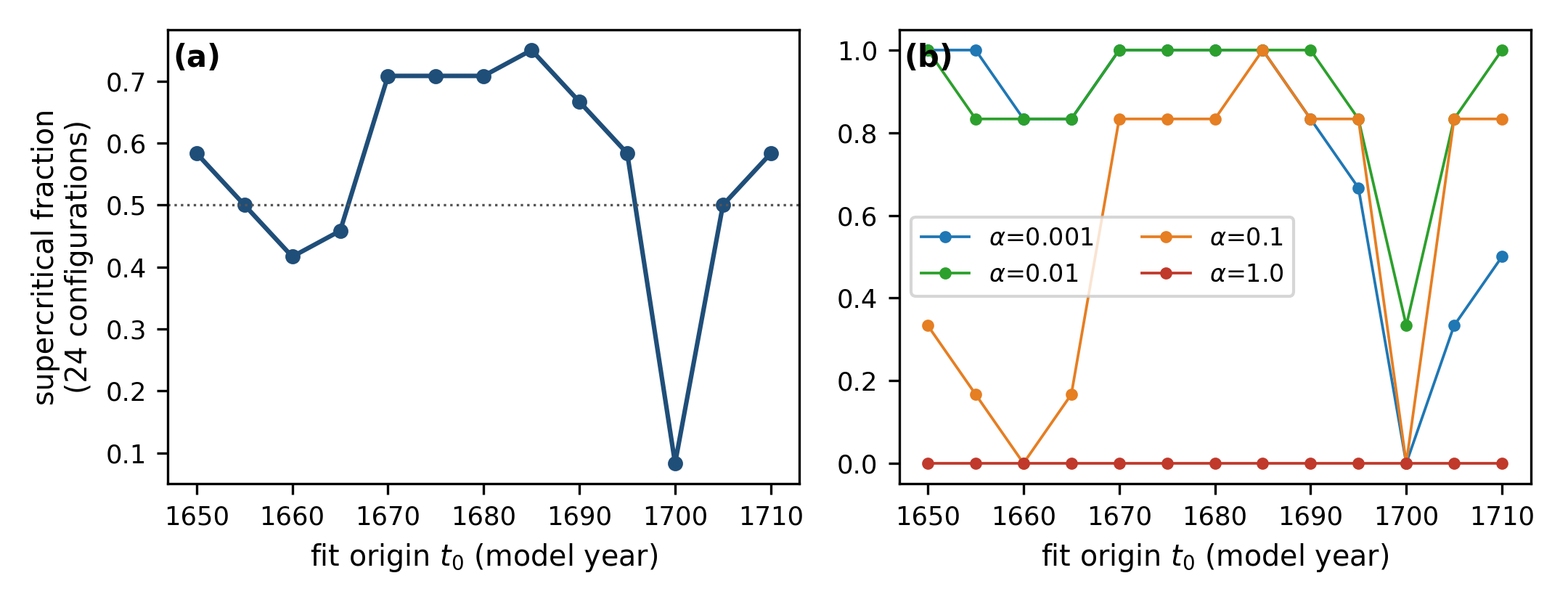}
\caption{Robustness of the pre-collapse supercritical band. (a) Fraction of
the 24 fixed configurations beyond the supercritical threshold, origins
1650--1710. (b) The same fraction stratified by ridge strength $\alpha$.}
\label{fig:robustness}
\end{figure}

\begin{figure}[t]
\centering
\includegraphics[width=\textwidth]{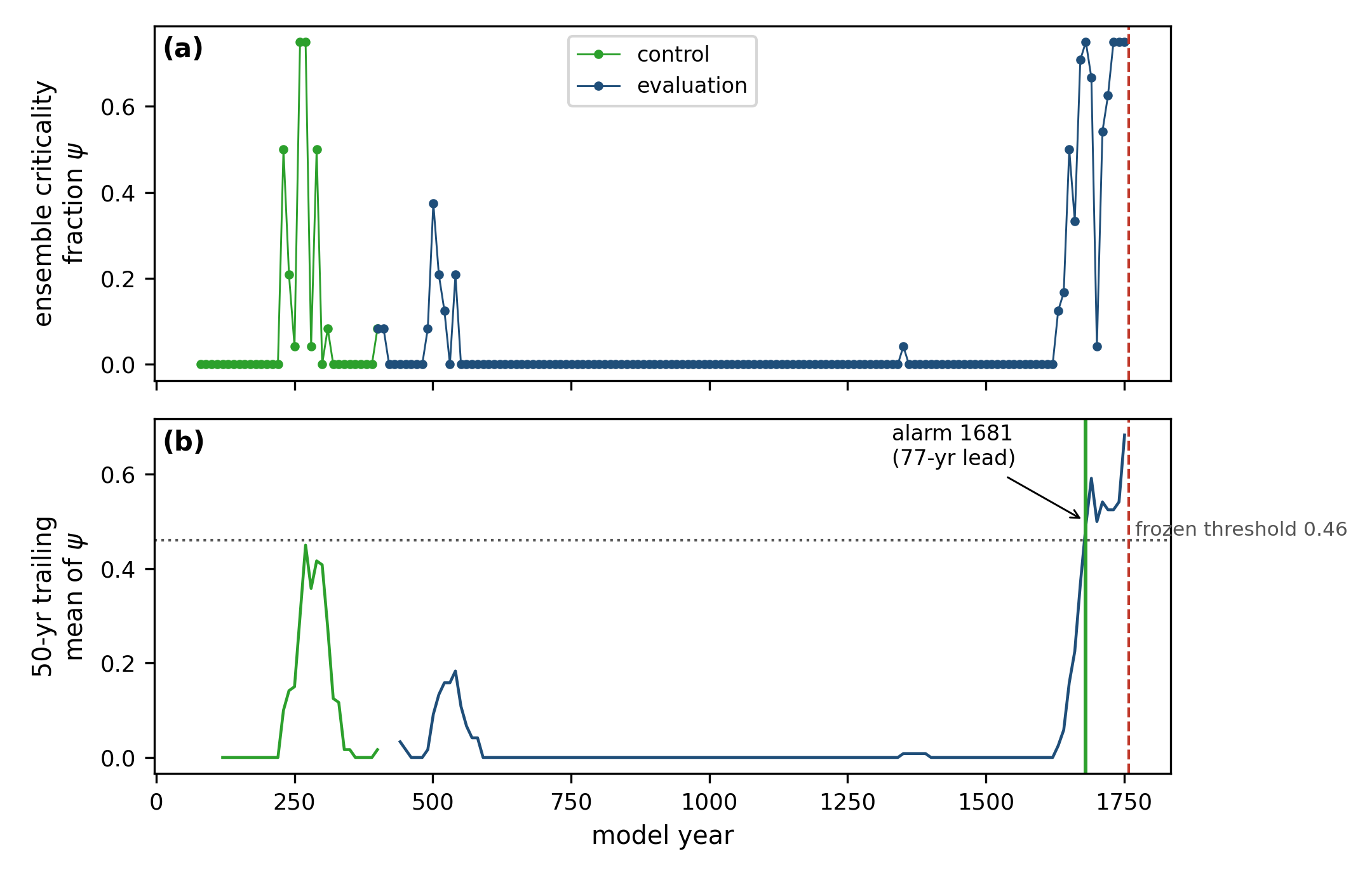}
\caption{(a) Ensemble criticality fraction $\psi$ for control (green) and
evaluation (blue). (b) Causal 50-year trailing mean; the frozen threshold
$0.46$ (dotted) is first exceeded at 1681, without retraction, 77 years before
the collapse (dashed red).}
\label{fig:psiup}
\end{figure}

\begin{table}[t]
\centering
\caption{Ascending-branch summary. ``Certified'' refers to frozen or amended
rules; the ensemble statistic is exploratory and is validated blindly in
Section~\ref{sec:validation}.}
\label{tab:summary}
\begin{tabular}{lccc}
\toprule
Indicator & Alarm year & Lead (yr) & Real-time cost \\
\midrule
Lag-1 autocorrelation (one-sided) & none & --- & --- \\
Variance (one-sided) & none & --- & --- \\
$\FovS$ sequential turning point & 1729 & 29 & 45 retracted alarms \\
Operator multiplier, point estimate & not certified & --- & 0 false alarms \\
Operator ensemble + duration (exploratory) & 1681 & 77 & 0 false alarms \\
\bottomrule
\end{tabular}
\end{table}

\subsection{Blind out-of-sample validation: the recovery transition}
\label{sec:validation}

\subsubsection{Frozen design and preregistered verdict}

Before any value of the descending branch was read (only file names were
listed), every component of the validation was frozen, committed, and publicly
timestamped (Amendment~2): the state contract, the 24-configuration ensemble,
the supercritical threshold transferred unchanged from the ascending control,
the trailing-mean threshold $0.46$, the certification rule, the origins
(2601--4391, every 10 years, minimum 400-year one-sided window), the ground-truth
procedure (two-segment break regression, computed last), and the success gate
(a certified alarm alive before the transition; zero certified-then-expired
alarms). The frozen run yields $t^{*}=4143$ and four certified alarms, all of
which expire: (2761--2811), (3191--3251), (3501--4131), and (4211--4321,
post-transition). No alarm is alive at $t^{*}$: \textbf{the validation gate
fails}, and the exploratory 77-year lead is not confirmed as a transferable
timing rule. We report this verdict unchanged.

\subsubsection{Labeled post-hoc diagnosis}
\label{sec:diagnosis}

The structure of the failure is informative (Fig.~\ref{fig:down}). The
collapsed state has index $1.4\pm1.7\Sv$ (years 2601--4050); its physical
departure precedes the frozen $t^{*}$, which dates the subsequent fast phase
(the recovery is roughly six times faster than the collapse
\cite{vanwesten2023}). Defining onset as the first sustained exceedance of
the collapsed-state mean plus three standard deviations gives onset year 4100
(index already $7$--$11\Sv$ there), and the identification is insensitive to
the convention: across fifteen variants ($2$--$4$ standard deviations,
persistence $3$--$10$ yr) the onset lies in 3984--4102
(Supplementary Table~1). Two facts then reframe the failure. First, the
certified alarm initiated at 3501 was alive at the physical onset under every
one of the fifteen conventions (trailing $\psi$ at onset $0.57$--$0.72$,
above threshold in $15/15$ cases); it expired at 4131, \emph{during} the
transit, precisely because $\psi$ collapsed ($1.00\to0.13\to0.00$ across
origins 4091--4121) once the system left the marginal collapsed state---the
expected behavior of a local stability meter during rapid transit. The formal
failure thus decomposes largely into a ground-truth convention effect. Second,
the six-century elevated episode (mean $\psi$ $0.94$ over 3501--3900, $0.74$
over 3901--4100) coincides with the collapsed state's slow approach to its
recovery saddle-node ($\FH\approx0.08$--$0.09\Sv$) and is therefore plausibly
a correct proximity reading rather than pure false alarm; against this, two
brief certified episodes (2761, 3191) are unambiguous false positives, and the
diagnostic is in alarm 45\% of the branch, showing that the transferred level
threshold is looser in the collapsed regime.

\begin{figure}[t]
\centering
\includegraphics[width=\textwidth]{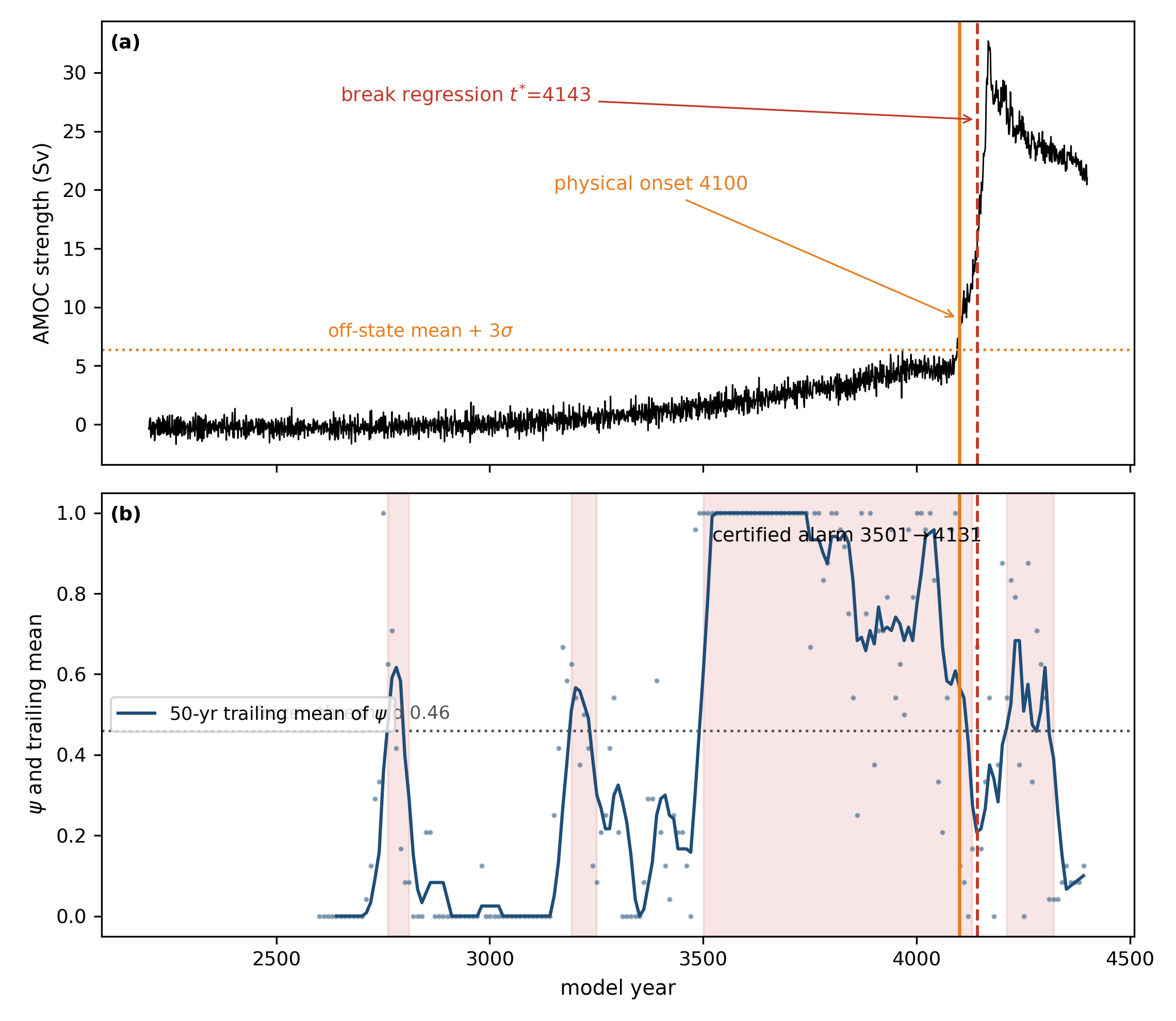}
\caption{Blind validation on the descending branch. (a) AMOC index with the
onset threshold (collapsed-state mean $+3\sigma$, orange dotted), the onset
year 4100 (orange), and the frozen break-regression $t^{*}=4143$ (red dashed).
(b) $\psi$ (dots), its 50-year trailing mean (line), the frozen threshold
(dotted), and certified alarm intervals (shaded). The 3501 alarm is alive at
the onset and expires during the rapid transit.}
\label{fig:down}
\end{figure}

\begin{remark}
All of Section~\ref{sec:diagnosis} is post-hoc and labeled as such; the onset
conventions were chosen after inspecting the descending branch, and
Supplementary Table~1 is a sensitivity analysis of the diagnosis, not a
passed gate. The preregistered verdict stands.
\end{remark}

\subsection{Preregistered follow-up: additional fixed-forcing and scenario
simulations}
\label{sec:protocolT}

The lessons of the two preceding sections were converted into a second frozen
experiment (Protocol T; Supplementary Note~2) on the additional
simulations of \cite{vanwesten2025}: a control branch far from the fold
(CESM\_0600), five fixed-forcing branches anchored at quasi-equilibrium years
1500--1700 (increasingly close to the fold; collapses, if any, are
noise-induced at constant forcing), and four transient scenario runs (RCP4.5
and RCP8.5 on two base states; finite forcing speed). Only scalar AMOC series
are archived for these runs, so the protocol prespecified the $d=1$ variant of
the operator, thresholds recalibrated on the same-experiment control,
onset-based ground truth as primary, a proximity-discrimination gate (T2) for
the fixed-forcing branches, and a per-run timing gate (T3) for the scenario
runs. Run lengths and outcomes were unknown at freezing; the protocol was
committed and publicly timestamped before any data value was read.

\textbf{Timing at finite forcing speed (T3): four for four.} In every
transient scenario run the frozen statistic certified an alarm before the
collapse onset: leads of 34 and 35 years for the RCP4.5 and RCP8.5 scenarios
from the CESM\_0600 base state (onsets 2029 and 2030), and 10 and 8 years for
the corresponding scenarios from the CESM\_1500 base state (onsets 2055 and
2043); in all four the trailing statistic remained above threshold
continuously from alarm to onset, rising to ensemble fractions of up to
$1.0$ (Fig.~\ref{fig:protocolT}a; Table~\ref{tab:protocolT}). Notably, the
two CESM\_0600 scenarios share their pre-divergence segment, and the alarm
(year 1995) fired within it: a single alarm was thereby validated against two
independently evolving outcomes. Against these detections stand zero
certified alarms in the three simulations without a transition---the control
branch and the two non-collapsing fixed-forcing branches, roughly 1{,}400
run-years in total; the largest excursion in any of them (a single-origin
peak in CESM\_1550) did not certify.

\textbf{Proximity discrimination for rapid noise-induced collapses (T2):
failed, and reported unchanged.} The three collapsing fixed-forcing branches
transition within roughly 80--130 years of branching, leaving 0, 8, and 17
evaluable pre-onset origins; one is window-limited outright, and the branch
with the longest usable record (CESM\_1600, 85 years of pre-onset trailing
values) shows a genuine null: its trailing statistic is zero throughout. Gate
T2 therefore fails. Subsequent discriminating tests (Section~\ref{sec:ablations}) show the null is
not explained by record length---prepending the full quasi-equilibrium history
leaves the statistic at zero---and, via an observability ablation on the
ascending branch, that reducing the observable to the scalar index loses the
alarm under identical rules. Limited observable richness is thereby identified as a major determinant of
detectability in these records.

\begin{table}[t]
\centering
\caption{Protocol T outcomes (all rules and thresholds frozen before
execution; outcomes unknown at freezing). Onset: first sustained 5-year
excursion below the run-reference mean $-3\sigma$.}
\label{tab:protocolT}
\begin{tabular}{lcccc}
\toprule
Run & Outcome & Onset & Certified alarm & Lead (yr) \\
\midrule
Control CESM\_0600 & no transition & --- & none & --- \\
Branch 1500 (fixed $\FH$) & no transition & --- & none & --- \\
Branch 1550 (fixed $\FH$) & no transition & --- & none & --- \\
Branch 1600 (fixed $\FH$) & collapse & 1803 & none & --- \\
Branch 1650 (fixed $\FH$) & collapse & 1782 & none & --- \\
Branch 1700 (fixed $\FH$) & collapse & 1781 & (window-limited) & --- \\
Scenario 0600 RCP4.5 & collapse & 2029 & 1995 & 34 \\
Scenario 0600 RCP8.5 & collapse & 2030 & 1995 & 35 \\
Scenario 1500 RCP4.5 & collapse & 2055 & 2045 & 10 \\
Scenario 1500 RCP8.5 & collapse & 2043 & 2035 & 8 \\
\bottomrule
\end{tabular}
\end{table}

\begin{figure}[t]
\centering
\includegraphics[width=.98\textwidth]{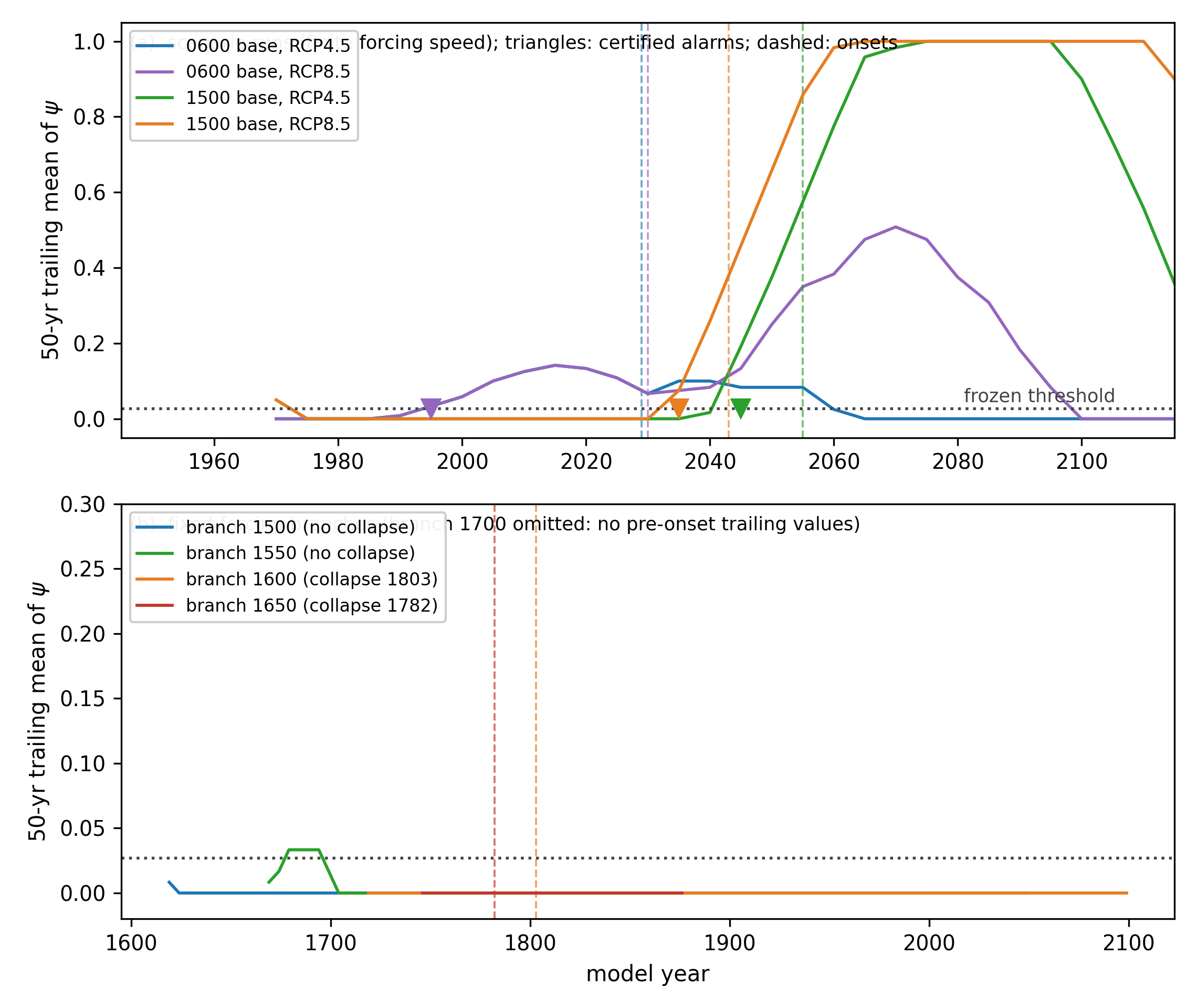}
\caption{Protocol T. (a) Trailing ensemble statistic for the four transient
scenario runs; triangles mark certified alarms, dashed lines the collapse
onsets, dotted line the frozen threshold. (b) The same statistic for the
fixed-forcing branches: near zero throughout, including the collapsing branch
1600 (branch 1700 is omitted: it collapses before any pre-onset trailing value
is defined).}
\label{fig:protocolT}
\end{figure}

\subsection{Ablations, matched comparators, and estimation uncertainty}
\label{sec:ablations}

Six diagnostic analyses, run after the frozen experiments and labeled as such,
address the identification of the signal
(Fig.~\ref{fig:ident}a; Table~\ref{tab:ablations}). First,
\emph{trend and forcing ablations} on the ascending branch, under identical
windows, origins, thresholds recalibrated per variant, and rules: removing a
within-window linear trend from every state variable and profile level leaves the
alarm essentially unchanged (1671 vs.\ 1681; zero control false alarms), while
removing or misaligning the forcing covariate makes the indicator alarm far
\emph{earlier} (961 and 1131)---the covariate protects the diagnostic from
drift contamination rather than creating the signal. Second, the
\emph{matched-information comparators}: setting $B=0$ in the same model (same
eight-dimensional state, forcing, ridge, windows, and rules---a first-order vector
autoregression with exogenous forcing) never alarms, its slow multiplier remaining
at $0.91$--$0.93$ through the final pre-collapse century, and an explicit-lag
VAR(3) with the same state and forcing is likewise blind (multipliers
$0.967$--$0.974$, no alarm); the added value is specifically the long-timescale
kernel structure, not multivariate information or lag order per se. Third, \emph{estimation validation} in
systems with known Jacobians (Fig.~\ref{fig:estval}): in a ramped Stommel model
the estimator preserves part of the ordering and the approach-to-criticality
trend (pooled correlation $0.86$ across three realizations) but exhibits severe
upward bias and range compression (calibration slope $0.13$, intercept $0.89$,
MAE $0.34$); control-relative thresholding substantially reduces the operational
effect of this bias in the controlled experiments examined here---the
pipeline-faithful version, calibrated on a stationary control segment, alarms in
orderly fashion as true criticality is approached. Tracking degrades with ramp
speed (correlation $0.89$, $0.86$, $0.68$ for 3{,}000-, 1{,}500-, and 750-year
ramps), the mirror image of the quasi-static limit: too slow and lead time is
ill posed, too fast and estimation quality falls. Under partial observation
($d=1$, hidden fast modes) the estimated multiplier is additionally attenuated (a
true change of $0.18$ compresses to $0.02$), quantifying a real limitation of
scalar records. Fourth, \emph{uncertainty}: moving-block residual bootstrap (200
replicates, block length 10, recursive regeneration under the fitted closed loop)
places the principal point-estimate episode entirely above unity, as reported in
the point-estimate subsection; we emphasize its scope---it certifies
$\hat\rho_{\mathrm{slow}}(1676)>1$, not the 1681 ensemble alarm. The alarm
year itself carries a separate, resampling-based uncertainty: block bootstrap of
the control fraction (threshold uncertainty) combined with block resampling of
the evaluation-series residuals (500 draws) yields an alarm in $98\%$ of
replicates, with median year 1691, interquartile range 1681--1701, and $90\%$
interval 1661--1711---a lead of 47--97 years. Composition sensitivity is
addressed next.

Fifth, an \emph{observability ablation on the ascending branch itself}, where the
full profiles exist, tests whether observable richness controls detectability
under identical rules (Fig.~\ref{fig:ident}b): the full $d=8$ state alarms at
1681; reducing to the scalar index ($d=1$) loses the alarm entirely; and the
intermediate reductions show the dependence is on \emph{which} coordinates rather
than on dimension count---the five physical coordinates without profile EOFs do
not certify (their control is noisier), while a two-coordinate state containing
the deep-return transport crosses early and sustains. The signal resides in the
vertical overturning structure, which the scalar index does not transmit. Sixth,
\emph{ensemble composition}: removing any regularization family, any timescale
family, or the oscillatory unit moves the alarm by at most ten years
(1671--1681), equal weighting by regularization family reproduces 1681 exactly,
and a continuous median-margin statistic yields a later, more conservative alarm
(1741) with zero control alarms; the 1681 alarm is a property of the ensemble,
not of its count.

Two further checks sharpen the Protocol T interpretation. Prepending the full
quasi-equilibrium history to the fixed-forcing branches (1{,}600--1{,}700 years
of additional record) leaves their pre-onset statistic at zero: the null is not
explained by record length, and together with the observability ablation above it
is best explained by the scalar observable, which does not transmit the vertical
structure in which the signal resides. And the matched scalar VARX applied to the
four scenario runs splits the finite-speed evidence in an instructive way: from
the strong base state a memoryless loss-of-contraction signal suffices and indeed
leads \emph{earlier} (79--80 vs.\ 34--35 years), while from the weakened base
state the memoryless detector is blind ($\psi_{\mathrm{VARX}}$ maximum $0.40$
and $0.00$) where the memory operator certified 10- and 8-year leads. The
evidence is therefore not that the memory operator dominates memoryless detectors
globally; it is that it matches them where a memoryless signal is present and
detects where it is absent. With three transition-free runs and four detected
transitions, exact binomial limits remain wide (two-sided 95\%: false-alarm
probability up to $0.71$, sensitivity from $0.40$); these counts establish
consistency, not calibrated rates.

\begin{figure}[t]
\centering
\includegraphics[width=.98\textwidth]{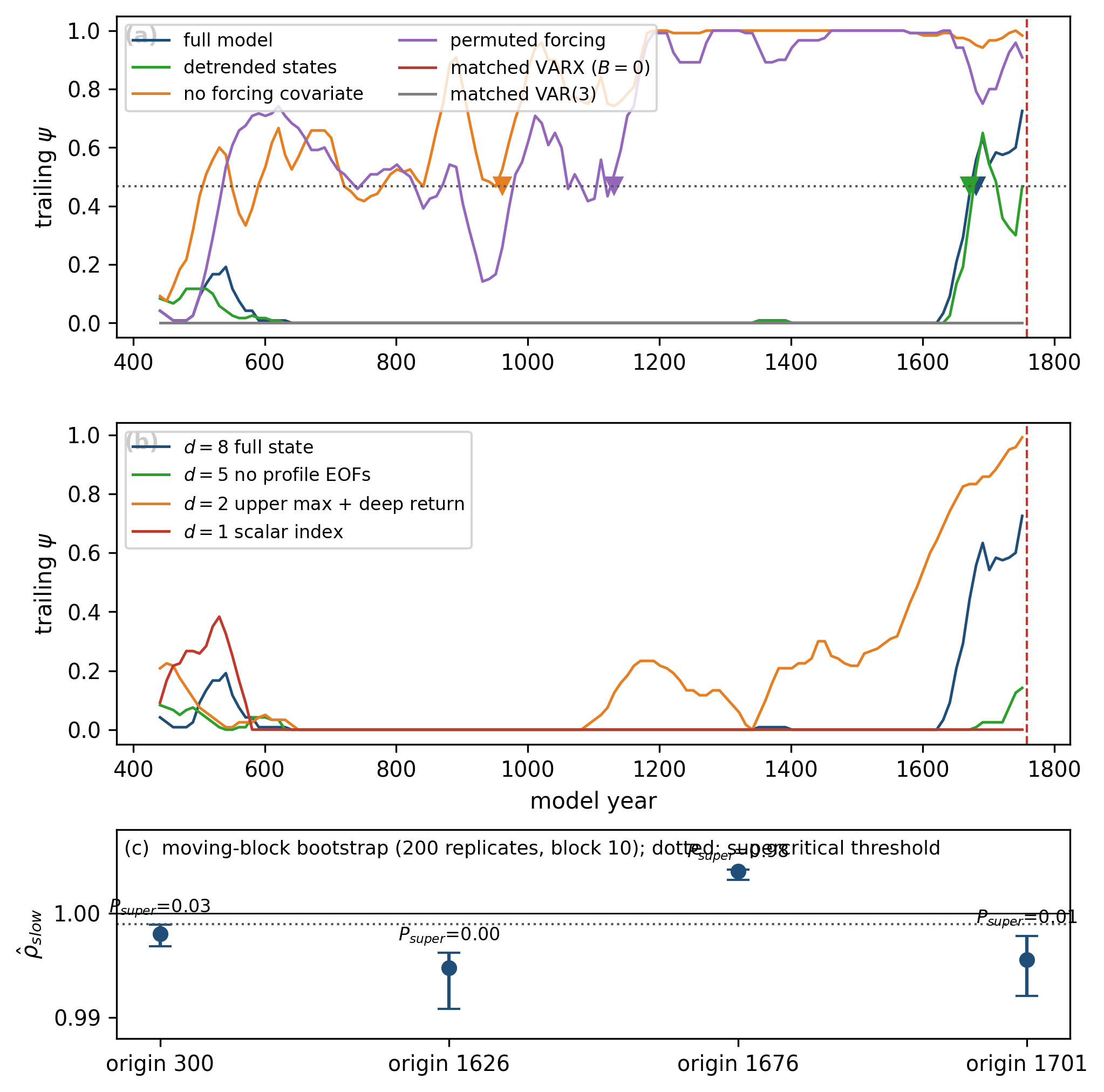}
\caption{Identification of the signal (ascending branch; identical origins,
windows, and rules; thresholds recalibrated per variant). (a) Trailing ensemble
statistic for the full model, detrended states, removed and permuted forcing,
and the matched memoryless comparators (VARX, VAR(3)); triangles mark certified
alarms. (b) Observability ablation: the alarm is lost when the observable is
reduced to the scalar index, and the dependence is on which coordinates carry
the vertical overturning structure rather than on dimension count. (c)
Moving-block bootstrap of the point-estimate multiplier at a control origin and
at the episode and its neighbors: the 1676 interval lies entirely above unity.}
\label{fig:ident}
\end{figure}

\begin{table}[t]
\centering
\caption{Ablations and matched comparators (diagnostic, post-freeze; identical
windows and rules throughout; thresholds recalibrated per variant on the same
control).}
\label{tab:ablations}
\begin{tabular}{lccc}
\toprule
Variant (ascending branch, $d=8$) & Alarm year & Lead (yr) & Sustained control alarms \\
\midrule
Full model (baseline) & 1681 & 77 & 0 \\
Detrended states & 1671 & 87 & 0 \\
No forcing covariate & 961 & --- & 0 \\
Permuted forcing covariate & 1131 & --- & 0 \\
Matched VARX ($B=0$) & never & --- & 0 \\
Matched VAR(3), explicit lags & never & --- & 0 \\
\midrule
Scenario runs ($d=1$) & MZ lead & VARX lead & \\
\midrule
0600 RCP4.5 / RCP8.5 & 34 / 35 & 79 / 80 & \\
1500 RCP4.5 / RCP8.5 & 10 / 8 & none / none & \\
\bottomrule
\end{tabular}
\end{table}

\begin{figure}[t]
\centering
\includegraphics[width=.95\textwidth]{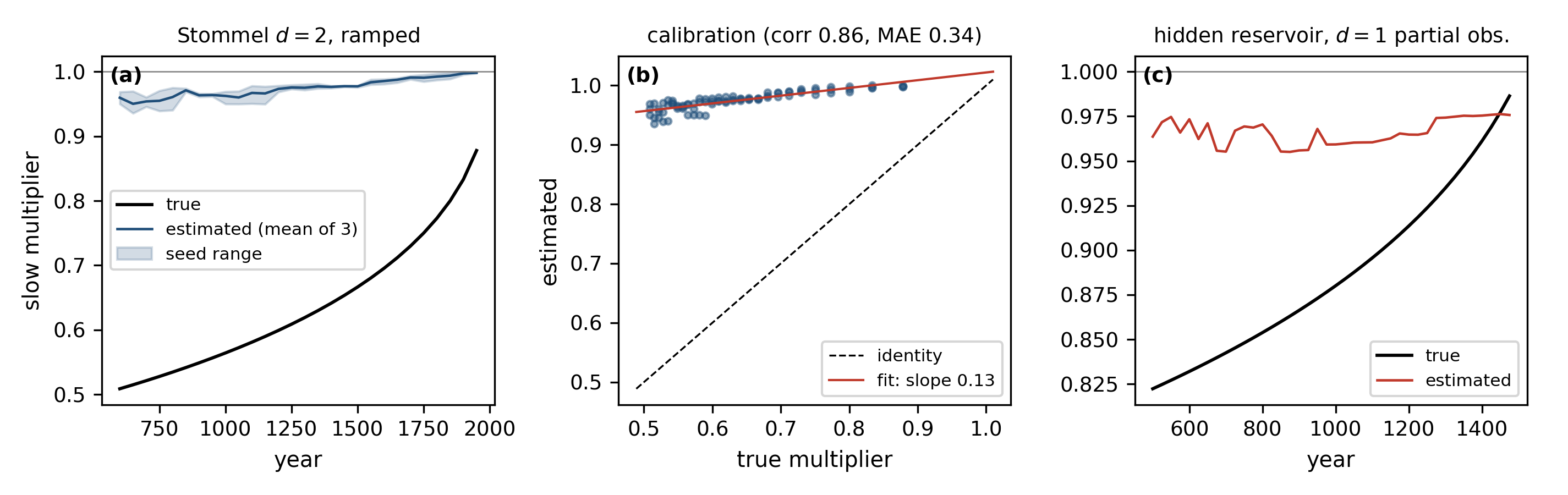}
\caption{Estimation validation in systems with known Jacobians. Left: ramped
Stommel model (full state, $d=2$): the estimated slow multiplier tracks the true
one with a bias toward unity that control-relative thresholding absorbs. Right:
hidden-reservoir model under partial observation ($d=1$): strong attenuation of
the estimated multiplier, quantifying the scalar-record limitation.}
\label{fig:estval}
\end{figure}

\section{Discussion}
\label{sec:discussion}

Across both transitions the diagnostic behaves consistently as a meter of
proximity to local criticality of the fitted effective dynamics: silent for
centuries of strong restoring, robustly elevated on the approach to a fold
(the pre-collapse century; the collapsed state's final six centuries), and
extinguished once the system is in rapid transit. What it is not, as
certified here, is a clock. We state this as a delimitation, not a
consolation: it fixes which claims the present evidence supports (graded,
mechanistically decomposable proximity) and which it does not (calibrated
lead times transferable across regimes).

Three separable causes emerged. (i) \emph{Selection noise}: hyperparameters
chosen by one-step loss make the point estimate flicker around threshold
while the ensemble is stable; certification applied to the selected
configuration inherits the flicker. (ii) \emph{Threshold transfer}: a
supercritical threshold calibrated in the on-state control is demonstrably
looser in the collapsed regime. (iii) \emph{Ill-posedness of lead under
quasi-static approach}: when forcing crawls toward a saddle-node the system
is genuinely near-critical for centuries and transition timing is set by
noise; a lead time is well defined only for transitions crossed at finite
forcing speed, which is precisely the ascending-branch regime in which the
77-year exploratory lead was obtained. A fourth, established by the ablation
round, is \emph{observable richness}: under partial (scalar) observation the
estimated multiplier is strongly attenuated, and detections there require
large true changes.

The follow-up experiment of Section~\ref{sec:protocolT} closes the loop
opened by these three causes: at finite forcing speed---the regime our analysis identified as the one where
lead time is well posed---the frozen statistic certified alarms in four scenario
trajectories spanning two initial-state families,
whereas rapid noise-induced escapes from short near-critical records
(the fixed-forcing branches) yielded no precursor, consistent with the
quasi-static lesson of the descending branch. The operating envelope of the indicator is thereby empirically supported, and
delimited on both sides, within the simulations examined here.

On the cleanest available simulated collapse, the comparative landscape is
now: classical scalar indicators fail for structural reasons, replicated with a
one-sided pipeline here; the physics-based $\FovS$ works with a quarter-century lead
but carries a newly quantified real-time cost of 45 retractions under our
sequential implementation and presupposes its mechanism; the memory-operator ensemble registers the
approach to criticality earlier than either, without prescribing a specific
feedback mechanism, with an
interpretation---delayed feedback eroding effective restoring---that
complements the salt-advection diagnostic. A pragmatic reading is that
proximity meters of the present type and physics-based indicators belong
together in a monitoring dashboard, the former for graded early warning, the
latter for mechanism-specific confirmation.

The pre-industrial control itself contains decadal episodes (years
$\sim$230--290) that the operator reads as near-critical; whether these
correspond to known modes of multidecadal AMOC variability in CESM is left
open. Everything here concerns one model and simulated transitions; nothing
licenses statements about the real ocean's tipping schedule. In particular, the
indicator measures proximity to transitions of the \emph{fitted effective
dynamics}---in the CESM experiments, transitions into strongly weakened
states---and is agnostic about whether basin-scale stabilizing pathways preclude a
complete collapse of the real circulation \cite{baker2025}; elevated proximity
readings and a dynamical floor of the kind reported by \cite{baker2025} can
coexist. The contribution
is methodological: a fully preregistered, blindly validated,
failure-inclusive assessment of a new indicator class on ground-truthed
transitions, reproducible from public data and timestamped code.

In sum, three messages emerge. First, delayed feedback carries stability
information that memoryless models with identical inputs do not recover in the
regimes where classical indicators fail: the matched first-order and
third-order autoregressions remain far from criticality throughout the
pre-collapse century, and are blind in the weakened-base scenarios where the
memory operator still certifies alarms. Second, the signal resides in the
vertical overturning structure and attenuates strongly under scalar
observation---connecting our observability ablation to the emerging question
of which resolved coordinates carry the critical mode
\cite{lohmann2025,smolders2025}. Third, whether proximity can be converted
into calibrated timing beyond the finite-rate simulations examined here is a
frozen, registered, and falsifiable question: generalization requires
initial-condition ensembles and additional models, and every rule needed to
test it is already public.

\section{Methods}
\label{sec:methods}

\subsection{Data and parity verification}
\label{sec:data}

All data are annual means from the 4{,}400-year CESM quasi-equilibrium
hysteresis experiment \cite{vanwesten2023,vanwesten2024}, from the authors'
public archives (Zenodo \texttt{10.5281/zenodo.10461549},
\texttt{10.5281/zenodo.8262424}; CC-BY 4.0). The North Atlantic freshwater
forcing $\FH$ increases linearly at $3\times10^{-4}\Sv\,\mathrm{yr}^{-1}$ from
$0$ to $0.66\Sv$ over model years 1--2200 (ascending branch; collapse at model
year 1758 by the published break-regression convention) and decreases
symmetrically over years 2201--4400 (descending branch), with recovery at much
weaker forcing, $\FH\approx0.08\Sv$, reflecting the published hysteresis
asymmetry \cite{vanwesten2023}. Figure~\ref{fig:overview} shows the full
experiment.

We use the archived overturning streamfunction
$\Psi(\mathrm{year},\mathrm{depth},\mathrm{lat})$ (61 levels, 310 latitudes),
the raw meridional-velocity section at $26^\circ$N, and the $34^\circ$S
velocity--salinity section from which $\FovS$ is computed. The scalar AMOC
index is the upper-1000-m transport at $26^\circ$N, recomputed by an
independent reimplementation of the authors' processing (partial bottom cells
included): agreement with the archived ascending-branch series is
$1.8\times10^{-15}\Sv$. On the descending branch the residual is
$2.0\times10^{-7}\Sv$; we verified that the two official archives themselves
disagree by up to $5.5\times10^{-8}\Sv$ on their 2{,}200 common years, so this
residual reflects the archives' numerical provenance, not the loader
(Amendment~3, Supplementary Note~2).

\subsection{Operator realization, fitting, and the ensemble statistic}
\label{sec:realization}

The memory bank comprises three exponential units with fixed timescales
$\tau$ and, optionally, one damped rotation (period 40 yr, damping $0.9$),
each carrying the full vector state; $D$ is block diagonal and Schur stable by
construction, $C$ the corresponding drive, and $(A-I,B,F,b_0)$ are fitted by
ridge regression on one-step increments within a strictly one-sided window
(standardization and EOFs likewise one-sided). Hyperparameters range over a fixed
grid of 24 configurations: timescale sets
$\{(1,5,20),(1,6,28),(2,8,30)\}$ yr, rotation on/off, ridge
$\alpha\in\{10^{-3},10^{-2},10^{-1},1\}$. Two diagnostics follow: the
\emph{point estimate} (multiplier of the configuration selected by one-sided
inner validation at each refit) and the \emph{ensemble criticality fraction}
$\psi(t_0)\in[0,1]$, the fraction of the 24 configurations whose exact
multiplier meets a fixed supercritical threshold; the duration statistic is
the 50-year trailing (past-only) mean of $\psi$.

The resolved state is an eight-dimensional description of the overturning
geometry at $25.80^\circ$N: upper-cell maximum, depth of the maximum,
deep-return minimum (1500--4500 m), mean upper (0--1000 m) and deep
(1500--4500 m) transports, plus the leading three one-sided EOF amplitudes of the
500--4500 m profile. The exogenous coordinate is the known ramp $\FH(t)$.

\subsection{Comparators, alarm rule, and preregistration}
\label{sec:rules}

\emph{Classical indicators}: lag-one autocorrelation and variance of the
scalar index in a trailing (past-only) 100-year window with within-window linear
detrending. \emph{Physics-based}: the archived $\FovS$ with a one-sided
turning-point rule mirroring the published retrospective minimum---alarm when
the running minimum is unprecedented relative to the 400-year control and the
series has spent three consecutive years at least one control standard
deviation above it; \emph{retracted} whenever a later new minimum occurs. We
report the final unretracted alarm and the retraction count as the
indicator's real-time cost.

All thresholds are calibrated only on the control segment (years 1--400,
$\FH\le0.12\Sv$) at a sustained-false-alarm budget of 5\% per control century.
Certification for block-wise diagnostics operates at refit resolution
(two consecutive refits at threshold; at most one sub-threshold refit
tolerated afterwards; Amendment~1). Evaluation origins are annual over
401--1757 with refits every 25 years (10 in sensitivity); lead is measured to
the published tipping year 1758. The protocol, amendments, code, and results
are committed with annotated tags whose chronology certifies design
$\to$ freeze $\to$ computation $\to$ results; each result below is labeled
\emph{preregistered}, \emph{amended}, or \emph{exploratory}
(Supplementary Note~2).

\subsection{Post-freeze ablations and matched comparators}
\label{sec:m_ablations}

All ablations use the origins, windows, certification rules, and false-alarm
budget of the main pipeline; only the fit input changes, and the supercritical
threshold is recalibrated per variant on the control segment by the frozen
procedure. \emph{Detrended}: within each one-sided window, every state variable and
every profile level is replaced by its residual from an ordinary least-squares
regression on an affine function of time computed over that window only, before
standardization and EOFs. \emph{No forcing}: the exogenous covariate is
identically zero. \emph{Permuted forcing}: the covariate is the $\FH$ series
circularly shifted by $+500$ years (a single deterministic misalignment null).
\emph{Matched VARX}: the same one-step regression with the memory columns
removed ($B\equiv0$), inner selection over the ridge grid by one-step validation
error; its diagnostic is the eigenvalue of $A$ closest to 1. \emph{Matched
VAR(3)}: three explicit state lags with the same forcing and intercept; the
diagnostic is the companion-matrix eigenvalue closest to 1. The observability
ablation refits the full pipeline on nested subsets of the state: the five
physical coordinates without EOFs ($d=5$); upper-cell maximum plus deep-return
transport ($d=2$); and the scalar index ($d=1$).

\subsection{Controlled validation in systems with known stability}
\label{sec:m_controlled}

\emph{Stommel model}: $\dot T=\eta_1-T-|T-S|T$, $\dot
S=\eta_2-\eta_3S-|T-S|S$ with $\eta_1=3$, $\eta_3=0.3$;
Euler--Maruyama with $\Delta t=0.05$, additive noise of standard deviation
$\sigma$ on both variables, annual sampling; forcing protocol: 500 stationary
years at $\eta_2=1.0$ (control) followed by a linear ramp to $\eta_2=1.2199$
over 1{,}500 years (the fold is at $\eta_2\approx1.2201$); three noise
realizations at $\sigma=0.02$ for calibration statistics, and single
realizations of 750- and 3{,}000-year ramps for the speed dependence. \emph{Hidden
reservoir}: $\dot y=r-y^2-b_1s_1-b_2s_2+\sigma_y\dot W$, $\dot
s_i=c_i(\sqrt r-y)-\epsilon_i s_i$ with
$(b_1,b_2,c_1,c_2)=(0.18,0.08,0.25,0.12)$,
$\epsilon_1=0.04+0.45\sqrt r$, $\epsilon_2=0.65$; only $y$ observed; 400
stationary years at $r=0.25$ then a ramp to $r=0.037$ (fold at
$r_c\approx0.0353$); trajectories are truncated if $y<-0.2$ (basin escape).
The true multiplier is $\rho_{\mathrm{true}}(t)=\exp[\lambda_{\max}(J(t))\,
\Delta]$ with $J(t)$ the Jacobian at the equilibrium of the frozen instantaneous
forcing and $\Delta=1$ year. The pipeline-faithful evaluation calibrates the
per-configuration threshold and the trailing threshold on the stationary control
segment exactly as in the main experiment.

\subsection{Moving-block bootstrap}
\label{sec:m_bootstrap}

At a given origin, the selected configuration's residuals are resampled in
moving blocks of length 10; a surrogate trajectory is regenerated recursively
under the fitted closed-loop map driven by the resampled residuals; the same
configuration is refitted to the surrogate; 200 replicates per origin. Reported
origins: one control (300) and the episode and its neighbors (1626, 1676, 1701).
Hyperparameter selection, standardization, and EOFs are held fixed within
replicates, so the intervals quantify residual-driven estimation noise for the
selected configuration; The alarm-year distribution is obtained separately, without refits: circular
block bootstrap (block of 5 origins) of the control fraction series regenerates
the trailing threshold, block resampling of the evaluation-series residuals
about its trailing mean regenerates the evaluation statistic, and the frozen
certification rule is reapplied; 500 draws. A full refit-level bootstrap of the
ensemble fraction remains future work.

\subsection{Protocol T scalar comparators}
\label{sec:m_protocolT_cmp}

The scalar VARX for the scenario runs uses the standardized index, the nominal
scenario ramp as covariate, inner selection over the ridge grid, the
control-branch calibration of Protocol T, and the same origins and rules as the
memory variant; the quasi-equilibrium prepend test concatenates the archived
ascending-branch index (years 1 to the branch anchor) with each branch series
and repeats the $d=1$ pipeline with thresholds recalibrated on the
correspondingly prepended control branch.

\backmatter

\bmhead{Supplementary information}
Supplementary Notes 1--2 and Supplementary Table 1 are provided as a single
separate Supplementary Information file: Supplementary Note 1 (statement and proof of the slow-multiplier expansion theorem), Supplementary Table 1 (robustness of the recovery
onset convention), Supplementary Note 2 (preregistration ledger).

\bmhead{Data availability}
All input data are the public archives of van Westen and coworkers (Zenodo
\texttt{10.5281/zenodo.10461549} and \texttt{10.5281/zenodo.8262424}; CC-BY
4.0). All derived data, results files, and figures of this manuscript are
permanently archived at Zenodo (DOI \texttt{10.5281/zenodo.21446597}).

\bmhead{Code availability}
The complete analysis code---ingestion with parity evidence, operator and
diagnostic modules, frozen protocols and amendments with their public
timestamps, and the ablation suite---is available at
\texttt{github.com/mauricio-herrera/EWS-CESM-MZ} and archived at Zenodo (DOI
\texttt{10.5281/zenodo.21446597}).

\bmhead{Acknowledgements}
Not applicable.

\bmhead{Funding}
The author received no specific funding for this work.

\bmhead{Author contributions}
M.H.-M.\ conceived the study, developed the theory, performed all analyses,
and wrote the manuscript.

\bmhead{Competing interests}
The author declares no competing interests.

\end{document}